\documentclass[prl,aps,amssymb,shownopacs,twocolumn,superscriptaddress,times,10pt]{revtex4}
\usepackage{amsmath}
\usepackage{amssymb}
\usepackage{amsthm}
\usepackage{amsfonts}
\usepackage{listings}
\usepackage{enumerate}
\usepackage{latexsym}
\usepackage{psfrag}
\usepackage{bm}
\usepackage[all]{xy}
\usepackage{graphicx}
\usepackage{subfigure}
\usepackage[pdftex,colorlinks=false]{hyperref}
\usepackage{xcolor}
\usepackage{morefloats}

\lstloadlanguages{Matlab}
\usepackage{times}

\newcommand{\beq}{\begin{equation}}
\newcommand{\eneq}{\end{equation}}















\def\be{\begin{equation}}
\def\ee{\end{equation}}
\def\ba{\begin{eqnarray}}
\def\ea{\end{eqnarray}}

\def\R{{\rm Re}}
\def\Z{\mathbb{Z}}
\def\C{\mathbb{C}}

\newcommand{\eqnref}[1]{Eq.~\eqref{#1}}
\newcommand{\punc}[1]{\,#1}

\def\mT{\mathcal{T}}

\def\beq{\begin{equation}}
\def\eeq{\end{equation}}
\def\barray{\begin{eqnarray}}
\def\earray{\end{eqnarray}}

%

\font\upright=cmu10 scaled\magstep1
\def\stroke{\vrule height8pt width0.4pt depth-0.1pt}

\def\Zmath{\mathbb{Z}}
\def\Qmath{\vcenter{\hbox{\upright\rlap{\rlap{Q}\kern
                   3.8pt\stroke}\phantom{Q}}}}
\def\Nmath{\vcenter{\hbox{\upright\rlap{I}\kern 1.7pt N}}}
\def\Cmath{\vcenter{\hbox{\upright\rlap{\rlap{C}\kern
                   3.8pt\stroke}\phantom{C}}}}
\def\Rmath{\vcenter{\hbox{\upright\rlap{I}\kern 1.7pt R}}}
\def\Z{\ifmmode\Zmath\else$\Zmath$\fi}
\def\Q{\ifmmode\Qmath\else$\Qmath$\fi}
\def\N{\ifmmode\Nmath\else$\Nmath$\fi}
\def\C{\ifmmode\Cmath\else$\Cmath$\fi}
\def\R{\ifmmode\Rmath\else$\Rmath$\fi}

\input{epsf}

\newtheorem{mylem}{Lemma}

\newcounter{defcounter}
\setcounter{defcounter}{0}

\begin{document}

\tolerance 10000

\newcommand{\cbl}[1]{\color{blue} #1 \color{black}}

\newcommand{\vk}{{\bf k}}

\title{No-Go Theorem for Boson Condensation in Topologically Ordered Quantum Liquids}

\author{
Titus Neupert}
\address{
 Department of Physics, University of Zurich, Winterthurerstrasse 190, 8057 Zurich, Switzerland
}
\address{
 Princeton Center for Theoretical Science, Princeton University, Princeton, New Jersey 08544, USA
}

\author{
Huan He}
\address{
Department of Physics, Princeton University, Princeton, New Jersey 08544, USA
}

\author{
Curt von Keyserlingk}
\address{
 Princeton Center for Theoretical Science, Princeton University, Princeton, New Jersey 08544, USA
}

\author{
Germ\'an Sierra}
\address{
Instituto de F\'isica Te\'orica, UAM-CSIC, Madrid, Spain
}

\author{
B. Andrei Bernevig}
\address{
Department of Physics, Princeton University, Princeton, New Jersey 08544, USA
}

\begin{abstract}
Certain phase transitions between topological quantum field theories (TQFT) are driven by the condensation of bosonic anyons. However, as bosons in a TQFT are themselves nontrivial collective excitations, there can be topological obstructions that prevent them from condensing. Here we formulate such an obstruction in the form of a no-go theorem. We use it to show that no condensation is possible in SO(3)$_k$ TQFTs with odd $k$. We further show that a ``layered'' theory obtained by tensoring SO(3)$_k$ TQFT with itself any integer number of times does not admit condensation transitions either. This includes (as the case $k=3$) the noncondensability of any number of layers of the Fibonacci TQFT.
\end{abstract}

\date{\today}
\pacs{03.67.Mn, 05.30.Pr, 73.43.-f}

\maketitle


Topological order, a fundamental concept in quantum many-body physics, is best understood in two-dimensional gapped quantum liquids, such as the fractional quantum Hall effect and certain spin liquids~\cite{FQH1,FQH2,Kitaev06,Kitaev,WWZ89,Wen89,Wen90,CGW10,Wen15}. In these systems, quasiparticle excitations with anyonic quantum-statistical properties emerge~\cite{W82}. Their fusion and braiding behavior at large distances define a topological quantum field theory (TQFT), which characterizes the universal properties of the phase~\cite{Bais3,HeMoradiWen,MoradiWen,LiuWangYouWen}.

The phase transitions between topological phases are, most of the times, driven by the condensation of bosons~\cite{KO01,Bais1,Bais2,Bais3,YJW12,BW10,BSS12,BSS11,HW13,L14,HW15b,Neupert16}. 
In the context of TQFTs, a boson is an emergent quasiparticle in the topologically ordered phase with bosonic self-statistics, but which could have nontrivial fusion and braiding relations with the other anyons. Such a quasiparticle can potentially undergo Bose-Einstein condensation, causing a phase transition to another topologically ordered phase. 
The topological data of the new phase can be inferred from those of the initial topological order \cite{Neupert16}. 

One motivation to study condensation transitions is to classify topological order. An important example are the 16 types of gauged chiral superconductors introduced by Kitaev~\cite{Kitaev06}. Kiteav showed that while two-dimensional superconductors are classified by an integer $\mathbb{Z}$, only $16$ bulk phases are topologically distinct. This construction can be understood by considering several layers of the elementary (Ising) TQFT. Coupling the layers by condensing inter-layer cooper pairs, one obtains exactly 16 distinct TQFTs including Ising, the toric code and the double semion model. They determine the nature of the topologically protected excitations in the vortices of each superconductor, including their braiding statistics. In essence, $\mathbb{Z}_{16}$ classification is a property of the Ising TQFT. 

It is imperative to ask whether multi-layer systems of other TQFTs show a similar collapse of the classification from $\mathbb{Z}$ to $\mathbb{Z}_{N}$ for some integer $N$. In this paper, we derive a criterion for when this is not the case, i.e., when the $\mathbb{Z}$ classification generated by a given TQFT is stable. This criterion is based on the fact that there exist
 bosonic anyons that cannot be condensed. An example are the bosons in multi-layered Fibonacci topological order~\cite{Bookera12,Bais2,Neupert16}.
In this work, we generalize this observation by formulating a no-go theorem that constitutes a sufficient obstruction against the condensation of a boson.
Our criterion and its proof are given using the tensor category formulation of topological order~\cite{WittenJonesPoly, FFRC06, LW05, LW14, KW14, KWZ15,Kitaev06,Bonderson07,BernevigNeupert15}, which we can use to describe the condensation transition axiomatically~\cite{Bais1,Bais2,Neupert16}. We apply our no-go theorem to several examples, including the forementioned multi-layer Fibonacci TQFTs.

{\it Formalism} ---
We use the algebraic formulation of anyon condensation discussed in Ref.~\onlinecite{Neupert16}. Here we simply restate the important relations and refer the reader to Ref.~\onlinecite{Neupert16} for details.
A \emph{fusion category} is characterized by a set of anyons $a,b,c,\ldots$ and fusion rules $a\times b=\sum_c N^c_{ab} c$ between them. The quantum dimension $d_a$ gives the size of the nonlocal internal Hilbert space associated with anyon $a$, and is equal to largest eigenvalue of the matrix $N_a$ with elements $(N_a)_{bc}\equiv N^c_{ab}$. A \emph{braided tensor category} has additional structure, of which we will use the topological spin $\theta_a$ of $a$, a complex number with $|\theta_a|=1$. 
Bosons are defined by $\theta_a=1$.
A special role is played by the vacuum anyon as the unique identity element of fusion. It is a boson with quantum dimension 1. 

Condensation is based on a mapping, called restriction, between the anyons $a$ in the original TQFT $\mathcal{A}$ and the anyons $t$ in the condensed fusion category $\mathcal{T}$ characterized by integers $n^t_a\in \mathbb{Z}_{\geq0}$:
\begin{equation}
a\mapsto a^{\downarrow} \equiv \sum_{t\in \mT} n_a^t t,\quad\forall a\in\mathcal{A}.
\label{eq: restriction map}
\end{equation}
If more than one particle appears on the right-hand side of Eq.~\eqref{eq: restriction map}, we say that the $a$ particle \emph{splits}. If $n^t_a\neq0$, we say $t$ is in the \emph{restriction} of $a$ or $t \in a^{\downarrow}$.
We require that $n^\varphi_1=1$, where $\varphi$ and $1$ are the vacua in $\mathcal{T}$ and $\mathcal{A}$, respectively. 
Imposing that condensation commutes with fusion implies the fundamental relation~\cite{Neupert16}
\begin{eqnarray}
\label{eq.commuteRstrFus}
\sum_{r,s\in\mT} n^r_a n^s_b \tilde{N}^t_{rs}=\sum_{c\in\mathcal{A}} N^c_{ab} n^t_c,
\end{eqnarray}
between the fusion coefficients $N^c_{ab}$ in $\mathcal{A}$ and the fusion coefficients $\tilde{N}^t_{rs}$ in ${\cal T}$. A corollary to Eq.~\eqref{eq.commuteRstrFus} \cite{Neupert16} is 
\begin{equation}
d_a=\sum_{r\in\mathcal{T}}n^r_a d_r,\qquad \forall \ a\ \in \mathcal{A}.
\label{eq: quantum dimensions}
\end{equation}
The restriction is compatible with conjugation to antiparticles, i.e.,
$n^t_a=n^{\bar{t}}_{\bar{a}}$,
where bar denotes the (unique) antiparticle of an anyon. 
We say particle $a$ \emph{condenses} if $\varphi \in a^{\downarrow}$,
i.e., $n^\varphi_a\neq0$. Common knowledge in condensed matter physics says that any bosons can condense.
However, it may also occur that a specific boson $a$ cannot condense, i.e., there is no solution to the above equations with $n^\varphi_a\neq0$. 
This is the situation we shall analyze in this paper.


Finally, the following definition is useful for formulating our no-go theorem: For a given anyon $b$, a subset $\mathcal{I}_b=\{a_1,\ldots,a_m\}$ of anyons is called a \emph{set of zero modes 
localized by $b$}~\cite{Barkeshli14} if for all $i,j=1,\ldots, m$:
\begin{enumerate}
\item The fusion products $a_i\times a_j$ do not contain condensable bosons, except the identity if $a_i=\bar{a}_j$,~\cite{footnote-Ib-set}
\item all $a_i$ are zero modes of $b$, by which we mean $a_i\times b= b+\ldots$, (i.e. $N_{a_i b}^b > 0$)
\item if a particle $a_i$ is in $\mathcal{I}_b$ then so is its antiparticle.
\end{enumerate}
Note that the choice of $\mathcal{I}_b$ for a given boson $b$ is not unique and that $\mathcal{I}_b$ may or may not contain the identity. (The above conditions are satisfied in both cases.) Typically, we will be interest to find a set $\mathcal{I}_b$ that is as large as possible. To motivate the terminology of the set $\mathcal{I}_b$, observe that $N_{a b}^b>0$ implies that $a$ anyons can always be emitted or absorbed by $b$. Therefore, $b$ must carry a zero-mode excitation of $a$.
We can now state our first main result, a general condition under which a boson $B$ cannot condense. It is an obstruction that is sufficient to show that condensation of $B$ cannot occur.

{\it No-go theorem} ---
A boson $B$ cannot condense if there exists a set $\mathcal{I}_B$, such that the sum of the quantum dimensions of all anyons in $\mathcal{I}_B$ exceeds the quantum dimension of $B$, i.e., if 
\begin{equation}
d_B<d_{a_1}+d_{a_2}+\cdots+d_{a_m}\punc{.}
\label{eq: contradiction}
\end{equation}


\begin{figure}[t]
\begin{center}
\includegraphics[width=0.27 \textwidth]{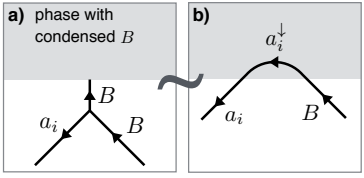}
\caption{
Tunneling processes mediated by an anyon condensate. 
The gray region is a phase in which a boson $B$ is condensed. 
a) Vertex of a boson $B$ that localizes a zero mode of anyon $a_i$. 
In the condensed phase, $B$ can be converted into an identity particle world line (not shown).  
By the axioms of anyon condensation, processes a) and b) are equivalent, i.e., $B$ can be converted into $a_i$ by tunneling through the condensate. 
}
\label{fig: physical picture}
\end{center}
\end{figure}

\begin{proof}
We start by showing that all particles in $\mathcal{I}_B$ do not split, and have distinct restrictions. This follows from inspection of Eq.~\eqref{eq.commuteRstrFus} for $t=\varphi$, $a=a_i$, $b=\bar{a}_j$,
\begin{equation}
\sum_{r\in \mathcal{T}}n^r_{a_i}n^r_{a_j}
=\delta_{i,j}
+
\sum_{c\neq 1} N_{a_i \bar{a}_j}^c n_c^\varphi,
\end{equation} 
where we used $n^\varphi_c=n^\varphi_{\bar{c}}$.
By assumption, there are no condensable bosons in $a_i\times \bar{a}_j$, hence $N_{a_i \bar{a}_j}^c$ and $n^\varphi_c$ cannot be both nonzero for any $c\neq 1$. Thus $\sum_r n^r_{a_i}n^r_{a_i}=1$, implying a single restriction $a^\downarrow_i$ of $a_i$, with $d_{a^{\downarrow}_i}=d_{a_i}$ using \eqnref{eq: quantum dimensions}. Moreover, $\sum_r n^r_{a_i}n^r_{a_j}=0$ if $i\neq j$, implying that the restrictions of $a_i\neq a_j$ are distinct particles.

With this knowledge about the restrictions of the $a_i$, Eq.~\eqref{eq.commuteRstrFus} for $t=\varphi$, $a=a_i$, $b=\bar{B}$ evaluates to
\begin{equation}
n^{\bar{a}^\downarrow_i}_{\bar{B}}=n^{a^\downarrow_i}_B
=\sum_c N^{\bar{c}}_{a_i \bar{B}} n^\varphi_c
\geq
N^B_{a_i B} n^\varphi_B,
\end{equation}
where we used $N^{\bar{B}}_{a_i \bar{B}}=N^B_{a_i B}$
Inserting this inequality in Eq.~\eqref{eq: quantum dimensions} for $a=B$, and using $d_{a_i}=d_{a^\downarrow_i}$, we have
\begin{equation}
d_B\geq n^\varphi_B\sum_{i=1}^m N^B_{a_i B}d_{a_i}.
\label{eq: quantum dimension condition B}
\end{equation}
It follows that in a situation where Eq.~\eqref{eq: contradiction}
 holds, Eq.~\eqref{eq: quantum dimension condition B} implies $n^\varphi_B=0$, i.e., $B$ does not condense. 
[Note that in the case $N^B_{a_i B}>1$, a stronger form of Eq.~\eqref{eq: contradiction} with $d_{a_i}$ is replaced by $N^B_{a_i B}d_{a_i}$ holds.]

To follow up with a pictorial representation of these equations, consider the tunneling of anyons across the domain wall as shown in Fig.~\ref{fig: physical picture}, where each particle $a$ in the uncondensed theory is converted into its restriction $a^{\downarrow}$ in the gray region.  Figure~\ref{fig: physical picture}~(a) shows a vertex allowed by the fusion rule $a_i\times B\to B$ in the uncondensed phase. The boson $B$ enters the condensed phase, where it can disappear as it is part of the condensate (one of its restrictions is the vacuum $\varphi$, the world lines of which can be removed at will). By the fundamental assumption that fusion and condensation commute [which is at the heart of Eq.~\eqref{eq.commuteRstrFus}], Fig.~\ref{fig: physical picture}~(a) is equivalent to Fig.~\ref{fig: physical picture}~(b). The latter represents a coherent tunneling process that is mediated by the condensate and converts $B$ into any of the $a_i$. The existence of this process implies that the distinct restriction $a^\downarrow_i$ of any $a_i$ must be in the restriction of $B$. 
Hence, by Eq.~\eqref{eq: quantum dimensions}, the quantum dimension of $B$ must be large enough to accommodate all the distinct restrictions of the $a_i$, if $B$ condenses. Therefore if we find sufficiently many $a_i$ such that Eq.~\eqref{eq: contradiction} holds, $B$ cannot condense. 
\end{proof}
Note that the no-go theorem does not a priori require knowing the braiding data of $\mathcal{A}$ -- although the modular tensor category structure fixes that data to some extend. The theorem involves only data obtainable from $N_{ab}^c$.
We remark that the no-go theorem can only ever yield an obstruction against the condensation of non-Abelian bosons. For Abelian bosons, the theory after condensation can be constructed explicitly, which is a constructive proof that there is no obstruction.~\cite{Neupert16}

We now demonstrate that the no-go theorem is practically useful by considering three examples: (i) multiple layers of the Fibonacci TQFT, (ii) single layers of the SO(3)$_k$ TQFT for $k$ odd, and (iii) multiple layers of the latter. We will show that all these theories, while containing bosons, do not admit condensation transitions. All the bosons are noncondensable. 
Additional general results, concerning for instance TQFTs with a condensing Abelian sector and with only a single boson, are given in the Supplemental Information.~\cite{Supp}

{\it Example (i): Multiple layers of Fibonacci} ---
The Fibonacci category $\mathcal{A}_{\mathrm{Fib}}$ is a non-Abelian TQFT containing just one nontrivial particle $\tau$ with a fusion rule
$\tau\times\tau=1+\tau$, a topological spin $\theta_\tau=e^{\mathrm{i}4\pi/5}$, and a quantum dimension $d_\tau=\phi$  given by the golden ratio $\phi=(1+\sqrt{5})/2$. As $\mathcal{A}_{\mathrm{Fib}}$ does not contain any nontrivial boson, it cannot undergo a condensation transition. We are interested whether the TQFT formed by $N$ identical layers of $\mathcal{A}_{\mathrm{Fib}}$ i.e., the TQFT $\mathcal{A}_{\mathrm{Fib}}^{\otimes N}$, admits a condensation transition. 
The TQFT $\mathcal{A}_{\mathrm{Fib}}^{\otimes N}$ contains $2^N$ particles corresponding to all possible distributions of $\tau$-particles over the $N$ layers.  For each $r=0,\ldots,N$ there are $\binom{N}{r}$ so-called  $(r\tau)$ particles with $\tau$'s in exactly $r$ layers, each with spin $\theta_{(r\tau)}=e^{\mathrm{i}4\pi r/5}$ and quantum dimension $d_{(r\tau)}=\phi^r$. The unique $r=0$ particle is the identity of $\mathcal{A}_{\mathrm{Fib}}^{\otimes N}$. From the topological spin, the bosons in $\mathcal{A}_{\mathrm{Fib}}^{\otimes N}$ are $(r\tau)$ particles with $r=5n, \ n\in \mathbb{Z}$. Using the no-go theorem, we show that none of these bosons can condense. 

Using proof by induction on $n\geq 1$, we show that for any $(5n\tau)$ boson $B$, there exists a set $\mathcal{I}_{(5n\tau)}$ such that Eq.~\eqref{eq: contradiction} holds. We first consider the case $n=1$. Given a $(5\tau)$ boson, we must construct a set $\mathcal{I}_{(5\tau)}$ for this boson. Consider the set formed by all $(2\tau)$ particles obtained by replacing any $3$ $\tau$'s in the boson with a 1. There are $\binom{5}{2}=10$ such  $(2\tau)$ particles for a given $(5\tau)$ boson. They form a set $\mathcal{I}_{(5\tau)}$ that obeys point 1--3 from the definition: point 1 holds as any product of two of these particles has at most 4 $\tau$s and is therefore not a (potentially condensable) boson. Points 2 and 3 can be checked by using the Fibonacci fusion rules in each layer. Finally, Eq.~\eqref{eq: contradiction} holds because
\begin{equation}
d_{(5\tau)}=\phi^5<10 \phi^2=\sum_{a_i\in \mathcal{I}_{(5\tau)}}d_{a_i}
\end{equation}
evaluates to about $11.1<26.2$. We conclude that none of the $(5\tau)$ bosons condense for any number $N$ of layers of Fibonacci TQFT.

For the induction step, we assume that none of the $(5n\tau)$ bosons can condense for $n<n_0$, $n_0>1$, and we show that the same holds for the $(5n_0\tau)$ bosons. Define $r_0:=\lfloor (5n_0-1)/2\rfloor$, where $\lfloor x\rfloor$ is the largest integer smaller than or equal to $x$. For a given $(5n_0\tau)$ boson, form the set $\mathcal{I}_{(5n_0\tau)}$ out of all $(r_0\tau)$-particles that are obtained by replacing any $(5n_0-r_0)$ $\tau$'s in the boson $(5n_0\tau)$ with a 1. There are $\binom{5n_0}{r_0}$ such  $(r_0\tau)$ particles. They form a set $\mathcal{I}_{(5n_0\tau)}$ for $(5n_0\tau)$. In particular their fusion products can only contain $(5n\tau)$-bosons with $n<n_0$, which cannot condense by assumption. Equation~\eqref{eq: contradiction} reads for this case
\begin{equation}
\phi^{5n_0}<\binom{5n_0}{r_0}\phi^{5n_0-r_0}. 
\label{eq: inequality Fib induction step}
\end{equation}
Using that $r_0\sim 5n_0/2$ and $\binom{5n_0}{5 n_0/2}\sim4^{5n_0/2}/\sqrt{\pi 5n_0/2}$ for large $n_0$, we obtain that the right-hand side of Eq.~\eqref{eq: inequality Fib induction step} grows like $4^{5n_0/2}\phi^{5n_0/2}/\sqrt{n_0}$, asymptotically dominating the left-hand side. An explicit evaluation yields that Eq.~\eqref{eq: inequality Fib induction step} holds for any $n_0\geq1$ in fact.
We have thus shown that none of the $(5n_0\tau)$ bosons can condense. This concludes the induction step and the proof that no boson in $\mathcal{A}_{\mathrm{Fib}}^{\otimes N}$ can condense. 

{\it Example (ii): Single layer of} SO(3)$_k$ ---
Our second example  focuses on the (single-layer) TQFTs associated with the Lie group SO(3) at values of odd level $k$. They contain bosons for an infinite subset of $k$. We  show that none of these bosons can condense. The SO(3)$_k$ TQFTs with $k$ odd have $(k+1)/2$ anyons 
 $j=0,\cdots, (k-1)/2$ with 
\begin{equation}
d_j=\frac{\sin\left(\pi \frac{2j+1}{k+2}\right)}{\sin\left[\pi/(k+2)\right]},\qquad
\theta_j=e^{2\pi\mathrm{i}j \frac{j+1}{k+2}}.
\label{eq: quantum dimensions SO(3)}
\end{equation}
We note that for $k$ odd, all particles have distinct quantum dimensions. 
The fusion rules are
\begin{equation}
N^{j_3}_{j_1j_2}=
\begin{cases}
1&|j_1-j_2|\leq j_3\leq \mathrm{min}\{j_1+j_2,k-j_1-j_2\}\\
0&\text{else}
\end{cases}.
\label{eq: fusion rules SO(3)}
\end{equation}
The smallest odd $k$ for which SO(3)$_k$ contains a boson is $k=13$, in which $j=5$ is a boson -- an uncondensable one, as we shall see. 

\begin{figure}[t]
\begin{center}
\includegraphics[width=0.49 \textwidth]{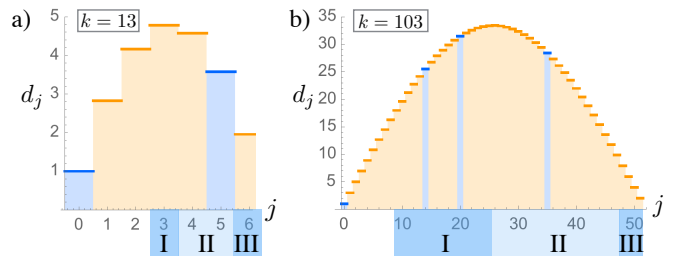}
\caption{
Quantum dimensions and bosons (blue columns) for SO(3)$_k$ theories with a) $k=13$ and b) $k=103$. These are the smallest $k$, for which SO(3)$_k$ contains two and four bosons, respectively. Indicated are also the ranges I--III defined in Eq.~\eqref{eq: ranges}. The maximum quantum dimension coincides with the boundary between range I and II in Eq.~\eqref{eq: ranges}.
For instance, to apply the no-go theorem to the $j=5$ boson in a), choose $\mathcal{I}_{j=5}=\{j=2\}$ and use that $d_5\approx3.6$ is smaller than $d_2\approx4.2$.
}
\label{fig: regions examples}
\end{center}
\end{figure}

The topological spins $\theta_j$ yield the condition
$j(j+1)=k+2$
for the lowest $j$ that may correspond to a boson (aside from the vacuum $j=0$). (Frequently, this condition cannot be met with integer $j$, as in the $k=13$ example, and the lowest boson appears at even higher $j$.)
We conclude that the first boson after $j=0$ cannot occur for $j$ lower than
\begin{equation}
j_0=\left\lfloor\sqrt{k+9/4}-1/2\right\rfloor.
\label{eq: j0}
\end{equation}
We will now discuss separately bosons $j$ in the three ranges
(see Fig.~\ref{fig: regions examples} for two examples)
\begin{subequations}
\begin{eqnarray}
\text{I.}&&\quad
j_0\leq j\leq \lfloor k/4\rfloor
\label{eq: range 1}
\\
\text{II.}&&\quad
\lfloor k/4\rfloor<j\leq\frac{k-1}{2}-\left\lfloor\frac{j_0-1}{2}\right\rfloor
\label{eq: range 2}
\\
\text{III.}&&\quad
\frac{k-1}{2}-\left\lfloor\frac{j_0-1}{2}\right\rfloor<j\leq\frac{k-1}{2}.
\label{eq: range 3}
\end{eqnarray}
\label{eq: ranges}
\end{subequations}

Due to Eq.~\eqref{eq: j0}, bosons $j_B$ in range III have no bosons in their fusion product $j_B\times j_B$, other than the identity. Thus, from Eq.~\eqref{eq.commuteRstrFus} for $t=\varphi$, and the fact that $B$ are their own antiparticles, we conclude that they cannot split. Using Eq.~\eqref{eq: quantum dimensions} and the fact that they have $d_{j_B}>1$, we conclude that they cannot restrict to the vacuum i.e., they cannot condense.

We now use our no-go theorem to show that bosons $j_B$ in range I are non-condensable. Specifically, we show that the particles $0<j<\lfloor j_B/2\rfloor$ form a set $\mathcal{I}_{j_B}$ of $j_B$ obeying Eq.~\eqref{eq: contradiction}. Before establishing that they satisfy the conditions for a set $\mathcal{I}_{j_B}$, let us show that Eq.~\eqref{eq: contradiction} holds for $\mathcal{I}_{j_B}$. For large $k$, we can rely on the following asymptotic estimate. Using that the sine function in Eq.~\eqref{eq: quantum dimensions SO(3)} is monotonously increasing with negative second derivative for $j\leq\lfloor k/4\rfloor$, the estimate 
\begin{equation}
2 j_B+1<\sum_{j=1}^{\lfloor j_B/2\rfloor -1}(2j+1)
\label{eq: range I d estimate}
\end{equation}
implies Eq.~\eqref{eq: contradiction} for $j_B$ in range I. This inequality holds for all $j_B\geq 10$. Using Eq.~\eqref{eq: j0} we conclude that it applies to all bosons in range I for $k\geq109$. We verified explicitly that inequality~\eqref{eq: contradiction} holds (using the exact values of the quantum dimensions) for all bosons in range I for $k<109$. Finally, it is readily verified using Eq.~\eqref{eq: fusion rules SO(3)} that $\mathcal{I}_{j_B}$ form a set of zero modes localized by $j_B$ provided that all bosons with $j<j_B$ cannot condense. The proof then proceeds straightforwardly by induction. 

We apply our no-go theorem successively to bosons $j_B$ in range II in order of increasing $j_B$.
Using the result that all bosons in range I are uncondensable, one verifies that the particles $j$ with
$1\leq j\leq \mathrm{min}\{k-2j_B,\lfloor j_B/2\rfloor-1\}$
form a set $\mathcal{I}_{j_B}$.
As for range I, we can estimate the quantum dimensions. 
From the relation  
$\sin[\pi(2j_B+1)/(k+2)]=\sin[\pi(k-2j_B+1)/(k+2)]$ we can estimate the quantum dimension of $j_B$ using $\sin[\pi(2j_B+1)/(k+2)]<\pi(k-2j_B+1)/(k+2)$. The quantum dimensions of the anyons in  $\mathcal{I}_{j_B}$ are estimated as for range I with $\sin[\pi(2j+1)/(k+2)]<\pi(2j+1)/(k+2)$. Using these estimates we find that if
\begin{equation}
k-2j_B+1<\sum_{j=1}^{\mathrm{min}\{k-2j_B,\lfloor j_B/2\rfloor-1\}}(2j+1)
\label{eq: range II d estimate}
\end{equation}
 holds, Eq.~\eqref{eq: contradiction} follows. 
In the case $k-2j_B<\lfloor j_B/2\rfloor-1$, Eq.~\eqref{eq: range II d estimate} reduces to $1<(k-2 j_B)^2+(k-2 j_B)$, which is true for all $j_B$ in range II for all $k$. 
In the case $k-2j_B>\lfloor j_B/2\rfloor-1$, Eq.~\eqref{eq: range II d estimate} simplifies to $k+2<2j_B+(\lfloor j_B/2\rfloor)^2$, which holds for all $j_B$ in range II if $k\geq 37$. We verified explicitly that Eq.~\eqref{eq: contradiction} holds for all bosons in range II if $k<37$ (they appear in $k=13,19,31$). 
This concludes our proof that no condensation transition is possible in the SO(3)$_k$ TQFT for any odd $k$.

We note that this result can be readily extended to SU(2)$_k$ with $k$ odd, since SO(3)$_k$ is the projection of SU(2)$_k$ to anyons with integer $j$. One simply includes the half-integer $j$ anyons in the theory (none of which are bosons). The sets $\mathcal{I}_b$ as defined above remain the same and so do all the quantum dimensions. Hence, we also showed the noncondensability of SU(2)$_k$, with $k$ odd.
This is consistent with the ADE classification of SU(2)$_k$~\cite{ADE}: There are no off-diagonal modular invariant partition functions for odd $k$ in SU(2)$_k$~\cite{yellow}. Thus, the no-go theorem provides a proof of this fact that is complementary to the ADE classification.

{\it Example (iii): Multiple layers of} SO(3)$_k$ ---
We can show that any number of layers of SO(3)$_k$, with $k$ odd, does not contain condensable bosons. Fixing $k$, the proof proceeds again by induction. As induction base, we proof that all multi-layer anyons with a nontrivial particle in only a single layer (and the identity anyon in the other $k-1$ layers) cannot condense nor split. To show that, we can use the single-layer result from  Example (ii).
For the induction step, we assume that for a fixed $k_0<k$ all multi-layer anyons with nontrivial particles in $l$ layers, $1\leq l\leq k_0$, cannot condense and do not split. We can then show that the same holds for multilayer anyons with nontrivial particles in $k_0+1$ layers, completing the induction. The details of this proof are given in~\cite{Supp}.

{\it Summary} ---
We have presented a generally applicable no-go theorem against  the condensation of a topological boson and illustrated it with several examples. The proof of our theorem uses mostly the fusion (as compared to the braiding) information of the TQFT. 
We showed a connection between our results and the ADE classification of SU(2)$_k$ theories, indicating that the no-go theorem might be useful for the classification of modular invariant partition functions of conformal field theories more broadly.~\cite{Neupert16}
It would be interesting to study, whether other obstructions against boson condensation exist or whether our no-go theorem actually constitutes a necessary condition. In all examples we know, noncondensability is captured by the no-go theorem.

The no-go theorem can not only be used to directly classify (symmetry enriched) TQFTs, but also symmetry protected topological phases without intrinsic topological order. The  classification of the latter is often related to the former upon gauging the protecting symmetry.~\cite{Levin12,Chen13}

Finally, we note that a particle physics analogy of the no-go theorem is that a particle cannot decay if its mass is smaller than the total mass of its potential products. In this picture desintegration amounts to condensation and mass to quantum dimension.

\begin{acknowledgments}
The authors thank Parsa Bonderson for useful discussions.
This work was supported by NSF CAREER DMR-095242, ONR - N00014-11-1-0635, ARO MURI W911NF-12-1-0461, NSF-MRSEC DMR-1420541, FIS2012-33642, SEV-2012-0249, Packard Foundation and Keck grant. 
  \end{acknowledgments}

\clearpage

\section{Supplemental Information}


\subsection{Proof for Example (iii): Multiple layers of SO(3)$_k$}
\label{app: estimating quantum dimensions}

In this section, we show that no condensation is possible in the TQFT $\mathrm{SO}(3)_k^{\otimes N}$ comprised of $N$ layers of $\mathrm{SO}(3)_k$ for any odd $k$ and any integer $N$. 
The proof goes by induction. We denote the particles in  $\mathrm{SO}(3)_k^{\otimes N}$  with a shorthand notation. An anyon that has the identity particle from $\mathrm{SO}(3)_k$ in all layers, except for the $k_0$ layers $i_1,\ i_2,\ \cdots, i_{k_0}$, is denoted by  
$\{j_{i_1}j_{i_2}\cdots j_{i_{k_0}}\}$. Here $1\leq j_{i_l}\leq (k-1)/2$ can stand for any anyon from SO(3)$_k$ (except the identity 0), for all $l=1,\cdots, k_0$.

\subsubsection{Induction base}

First, consider particles $\{j_{i}\}$ with just one nontrivial anyon in some layer $i$. This will serve as the induction base.
By the no-go theorem and our proof in Example (ii), we know that no bosons of form $\{j_{i}\}$ can condense. [Use the particles with only one nontrivial in that same layer $i$ to build the set $\mathcal{I}_{j_i}$ as elaborated for Example (ii).] As a corollary, the anyons $\{j_{i}\}$ do not split: when fused with themselves no condensable boson appears in the fusion product, which prevents splitting by Eq.~(2) from the main text for $t=\varphi$.

\subsubsection{Induction step}

We assume that for any $1\leq l\leq k_0$ all $\{j_{i_1}j_{i_2}\cdots j_{i_{l}}\}$ 
\begin{enumerate}
\item do not condense and 
\item do not split.
\end{enumerate}
We now show the induction step, namely that all particles with nontrivial anyons in $(k_0+1)$ layers $\{j_{i_1}j_{i_2}\cdots j_{i_{k_0+1}}\}$ neither condense nor split.

We begin by showing that $\{j_{i_1}j_{i_2}\cdots j_{i_{k_0+1}}\}$ cannot condense.
The particles $\{j_{i_1}j_{i_2}\cdots j_{i_{k_0+1}}\}$ can be obtained by fusing a $\{j_{i_1}j_{i_2}\cdots j_{i_{k_0}}\}$ with a $\{j_{i_{k_0+1}}\}$, where $i_{k_0+1}\notin\{i_1,\cdots, i_{k_0}\}$. In this case, Eq.~(2) from the main text reads for $t=\varphi$
\begin{equation}
\tilde{N}^{\varphi}_{\{j_{i_1}\cdots j_{i_{k_0}}\}^{\downarrow},\{j_{i_{k_0+1}}\}^{\downarrow}}
=n_{\{j_{i_1}j_{i_2}\cdots j_{i_{k_0}}j_{i_{k_0+1}}\}}^{\varphi}.
\label{eq: n tilde Fib proof}
\end{equation}
Now, because of the uniqueness of the antiparticle, 
$\tilde{N}^{\varphi}_{\{j_{i_1}\cdots j_{i_{k_0}}\}^{\downarrow},\{j_{i_{k_0+1}}\}^{\downarrow}}
$ can be either 0 or 1. If it was 1, $\{j_{i_1}j_{i_2}\cdots j_{i_{k_0}}\}^{\downarrow}$ would be the antiparticle of $\{j_{i_{k_0+1}}\}^{\downarrow}$. Because all particles are their own antiparticles, this would imply  $\{j_{i_1}j_{i_2}\cdots j_{i_{k_0}}\}^{\downarrow}=\{j_{i_{k_0+1}}\}^{\downarrow}$. However, this is not possible for $k_0>1$, because the associativity of fusion would then also imply that $\{j_{i_1}\}^{\downarrow}$ is the antiparticle (and coinciding with) $\{j_{i_2}\cdots j_{i_{k_0}}j_{i_{k_0+1}}\}^{\downarrow}$, i.e.,
$\{j_{i_1}\}^{\downarrow}=\{j_{i_2}\cdots j_{i_{k_0}}j_{i_{k_0+1}}\}^{\downarrow}$.
Remembering that $\{j_{i_1}j_{i_2}\cdots j_{i_{k_0}}\}$, $\{j_{i_2}j_{i_2}\cdots j_{i_{k_0}+1}\}$ do not split, and 
equating the quantum dimensions of the particles 
for these two identifications
we have
\begin{equation}
\begin{split}
&d_{j_{i_1}}d_{j_{i_2}}\cdots d_{j_{i_{k_0}}}=d_{j_{i_{k_0+1}}},
\\
&d_{j_{i_1}}=d_{j_{i_2}}\cdots d_{j_{i_{k_0}}}d_{j_{i_{k_0+1}}}.
\label{eq: quantum dimension partitioning} 
\end{split}
\end{equation}
For $k_0>1$, this contradicts the fact that all nontrivial particles in this theory have quantum dimensions $d>1$. This rules out the possibility $\tilde{N}^{\varphi}_{\{j_{i_1}\cdots j_{i_{k_0}}\}^{\downarrow},\{j_{i_{k_0+1}}\}^{\downarrow}}=1$ and shows that $\{j_{i_1}j_{i_2}\cdots j_{i_{k_0+1}}\}$ does not condense for $k_0>1$.

The case $k_0=1$ needs to be considered separately, as both lines in Eq.~\eqref{eq: quantum dimension partitioning} are identical in this case, and therefore do not lead to a contradiction.
Assume that $\tilde{N}^{\varphi}_{\{j_{i_1}\}^{\downarrow},\{j_{i_{2}}\}^{\downarrow}}
=1$.
In the case $j_{i_1}\neq j_{i_2}$, we can rely on the following argument to disprove this assumption:  As all anyons in SO(3)$_k$ with $k$ odd have distinct quantum dimension, it follows that the two anyons $\{j_{i_1}\}$ and $\{j_{i_2}\}$ restrict to distinct particles and in particular $\varphi \notin j^{\downarrow}_{i_1}\times j^\downarrow_{i_2}$ -- with Eq.~(2) from the main text this implies that  $\{j_{i_1} j_{i_2}\}$ neither splits nor condenses. 
In the case $j_{i_1}=j_{i_2}\equiv j$, define $\hat{j}\equiv\{j_{i_1} j_{i_2}\}$.
We want to show that $\hat{j}$ does not condense. 
As there are no fermions in SO(3)$_k$ with $k$ odd, $\hat{j}$ can only be a boson if $\theta_j=1$, i.e., if $\{j_{i_1}\}$ and $\{j_{i_2}\}$ are bosons. 
Our no-go theorem applies to all bosons $\{j_{i_1}\}$ and $\{j_{i_2}\}$ with zero mode sets  $\mathcal{I}_{\{j_{i_1}\}}$ and $\mathcal{I}_{\{j_{i_2}\}}$.
We can then use the set $\mathcal{I}_{\hat{j}}=\mathcal{I}_{\{j_{i_1}\}}\times\mathcal{I}_{\{j_{i_2}\}}$, containing the fusion product of any particle in $\mathcal{I}_{\{j_{i_1}\}}$ with any particle in $\mathcal{I}_{\{j_{i_2}\}}$, to prove that $\hat{j}$ cannot condense.
To show that $\mathcal{I}_{\hat{j}}$ is a set of zero modes of $\hat{j}$, the main challenge is to show that the product of any two elements from  $\mathcal{I}_{\hat{j}}$ cannot condense. The product of any two elements from  $\mathcal{I}_{\hat{j}}$  is always of the form $\{j_{i_1} j_{i_2}\}$.  We have shown that when $j_{i_1}\neq j_{i_2}$ such particles cannot condense. We therefore need only show that non-trivial particles of form $\{j_{i_1} j_{i_2}\}$ with $j_{i_1}= j_{i_2}$ both bosons cannot condense. In order to show they are not condensable, we can use the proof given for Example (ii). For that, observe that the anyons $\hat{j}$ have the same fusion coefficients among themselves as the $j$ anyons in SO(3)$_k$ in Example~(ii) have, i.e., $N_{\hat{j},\hat{j}'}^{\hat{j}''}=N_{j,j'}^{j''}$, where $j,\ j', j'' \in $ SO(3)$_k$. Recall that conditions 1--3 from the definition of a set of zero modes only depend on the fusion coefficients $N_{j,j'}^{j''}$ and the information, which particles are bosons. Hence, conditions 1--3 are satisfied for  $\mathcal{I}_{\hat{j}}$ whenever they are satisfied for $\mathcal{I}_{j}$ in Example~(ii). It remains to show that $\mathcal{I}_{\hat{j}}$ is of large enough quantum dimension to satisfy the fundamental inequality Eq.~(4) from the main text.
For $\hat{j}$, Eq.~(4) from the main text takes the form
\begin{equation}
d_{\hat{j}}=d^2_j<
\Bigl(\sum_{a\in \mathcal{I}_{j}} d_a\Bigr)^2
=\sum_{\hat{a}\in \mathcal{I}_{\hat{j}}} d_{\hat{a}}.
\end{equation}
Upon taking the square root, this is equivalent to Eqs.~(14) and~(15) from the main text, which were shown to hold in Example (ii). 
Therefore the $\hat{j}=\{j_{i_1} j_{i_2}\}$ anyons do not condense and all $\{j_{i}\}$ have distinct restrictions.

We conclude that  for any $k_0\geq1$ only $\tilde{N}^{\varphi}_{\{j_{i_1}\cdots j_{i_{k_0}}\}^{\downarrow},\{j_{i_{k_0+1}}\}^{\downarrow}}=0$ is permitted and hence \eqnref{eq: n tilde Fib proof}
implies that $\{j_{i_1}j_{i_2}\cdots j_{i_{k_0+1}}\}$ does not restrict to the identity $\varphi$, i.e., it does not condense. This proves the assumption 1 of the induction step for $k_0+1$.

To complete the induction step, we need to show that $\{j_{i_1}j_{i_2}\cdots j_{i_{k_0+1}}\}$ does not split. For that, consider Eq.~(2) from the main text for $\{j_{i_1}\cdots j_{i_{k_0+1}}\}$ with itself and $t=\varphi$
\begin{align}
\sum_{r}\left(n^r_{\{j_{i_1}\cdots j_{i_{k_0+1}}\}}\right)^2=&\,
\sum_c N^c_{\{j_{i_1}\cdots j_{i_{k_0+1}}\},\{j_{i_1}\cdots j_{i_{k_0+1}}\}} \, n^{\varphi}_c\nonumber\\
=&\, n^{\varphi}_1=1.
\end{align}
We have used that  none of the $\{j_{i_1}j_{i_2}\cdots j_{i_{l}}\}$ with $1\leq l\leq k_0+1$ can restrict to the identity $\varphi$ since they cannot condense. This implies none of $\{j_{i_1}\cdots j_{i_{k_0+1}}\}$ splits, which proves the assumption 2 of the induction step for $k_0+1$.

We have thus shown inductively that none of the particles (except for the vacuum) restricts to the vacuum in the $N$-layer theory SO(3)$_k^{\otimes N}$. Thus, there is no condensate and with it no condensation in any number $N$ of layers of SO(3)$_k$ with $k$ odd.

\subsection{No-go theorem with Abelian sector}
\label{app: abelian no-go}

We have seen from the examples discussed in the main text, that the no-go theorem can often be used to not only show that individual bosons in a TQFT cannot condense, but that an entire TQFT is not condensable. Here, we extend this discussion to examples of TQFTs that have noncondensable sub-structures.
This problem is motived by physical examples: in the fractional quantum Hall effect, for instance, one frequently discusses phases that are described by a direct (or semi-direct) product of an Abelian and a non-Abelian TQFT. A simple example is the $\mathbb{Z}_3$ Read-Rezayi state of bosons, which is described by the TQFT $\mathcal{A}_{\mathrm{Fib}}\times\mathbb{Z}_2$. While such a theory admits condensations, already in the $\mathbb{Z}_2^{\otimes N}$ sector,  when enough layers are considered, one has the intuition that the noncondensability of Fibonacci should still constrain the possible condensations.

\begin{mylem}
Consider a TQFT
\begin{equation}
\mathcal{A}\times\mathcal{X},
\end{equation}
where $\mathcal{X}$ is an Abelian TQFT (i.e., all its anyons have quantum dimension 1). Further,  
for all particles $b\in\mathcal{A}$ (not only for the bosons), except for the vacuum, let there exist a set $\mathcal{I}_b=\{a_1,\ldots,a_m\}$  of zero modes of $b$, containing anyons from $\mathcal{A}$, such that
 the quantum dimensions satisfy
\begin{equation}
d_b\leq\sum_{i=1}^m d_{a_i}.
\label{eq: quantum dimensional obstruction for any particle}
\end{equation}
Then, any possible condensation transition will lead to a theory of the form
\begin{equation}
\mathcal{A}\times\mathcal{Y},
\label{eq: final product theory}
\end{equation}
where the Abelian TQFT $\mathcal{Y}$ can be obtained from $\mathcal{X}$ through a condensation. 
\label{abelian Lemma}
\end{mylem}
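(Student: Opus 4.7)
The goal is to show that all condensable bosons in $\mathcal{A}\times\mathcal{X}$ must be of the form $(1,x)$, so the condensate lies entirely in the Abelian sector. Once this is established, the restriction map factorises and the resulting theory has the form $\mathcal{A}\times\mathcal{Y}$, with $\mathcal{Y}$ defined by the standard Abelian condensation of $\mathcal{X}$ by the $x$'s.

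To exclude a candidate condensing boson $(b,x)$ with $b\neq 1$, I would lift the hypothesised $\mathcal{I}_b\subset\mathcal{A}$ to a set of zero modes for $(b,x)$ in the product theory: $\tilde{\mathcal{I}}_{(b,x)}:=\{(a_i,1):a_i\in\mathcal{I}_b\}$. Using $(a_i,1)\times(b,x)=(a_i\times b,x)$ and the trivial fusion $1\times x=x$ in $\mathcal{X}$, conditions~2 and~3 of the zero-mode definition descend immediately from their analogues in $\mathcal{A}$. Condition~1 requires that $(a_i\times a_j,1)$ contain no condensable boson of $\mathcal{A}\times\mathcal{X}$ other than the identity when $a_i=\bar a_j$; I would secure it by running the whole argument as an induction over the non-vacuum particles of $\mathcal{A}$ (ordered, say, by increasing quantum dimension), so that any $(c,1)$ with $c\neq 1$ appearing in a fusion product has already been shown non-condensable at an earlier inductive stage.

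With $\tilde{\mathcal{I}}_{(b,x)}$ satisfying the zero-mode axioms, I would plug into the reasoning of the main no-go theorem. Since $d_y=1$ for all $y\in\mathcal{X}$, one has $d_{(a_i,1)}=d_{a_i}$ and $d_{(b,x)}=d_b$, so the hypothesis gives $\sum_i d_{(a_i,1)}\geq d_{(b,x)}$. The weak inequality does not directly trigger the no-go theorem, but it can be sharpened by retaining the vacuum contribution in the restriction sum $d_{(b,x)}=\sum_r n^r_{(b,x)} d_r$. Separating the $n^\varphi_{(b,x)}\cdot 1$ term from the $m$ distinct restrictions $a_i^\downarrow\neq\varphi$, and substituting $n^{a_i^\downarrow}_{(b,x)}\geq N^b_{a_i b}\,n^\varphi_{(b,x)}$, yields
\begin{equation*}
d_b\;\geq\;n^\varphi_{(b,x)}\Bigl(1+\sum_i N^b_{a_i b}\,d_{a_i}\Bigr).
\end{equation*}
Combined with $d_b\leq\sum_i d_{a_i}$ and $N^b_{a_i b}\geq 1$, this forces $n^\varphi_{(b,x)}<1$, hence $n^\varphi_{(b,x)}=0$, contradicting the assumption that $(b,x)$ condenses. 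The inductive hypothesis is what ensures $a_i^\downarrow\neq\varphi$ for all $i$, i.e.\ that none of the $a_i$ themselves condense.

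Once every $(b,x)$ with $b\neq 1$ has been excluded, the condensate is supported on $1\times\mathcal{X}$, i.e.\ on a set of bosons in $\mathcal{X}$ alone. Applying Eq.~\eqref{eq.commuteRstrFus} to pairs $(a,x')$ with only such terms in the condensate, the restriction map factorises: $a$ carries a unique restriction with $d_{a^\downarrow}=d_a$, while $x'$ undergoes the ordinary Abelian condensation in $\mathcal{X}$ producing $\mathcal{Y}$. The main obstacle of the proof is the refinement needed to convert the weak hypothesis inequality into a strict contradiction. This rests on two ingredients---the $+1$ contribution of $\varphi$ to the restriction sum of $(b,x)$, and the inductive guarantee that none of the $a_i\in\mathcal{I}_b$ themselves condense---so correctly setting up and running the induction over the $\mathcal{A}$-particles is the crux of the argument.
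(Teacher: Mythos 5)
Your overall route is the same as the paper's: lift the assumed zero-mode sets from $\mathcal{A}$ into the product theory, invoke the no-go theorem to exclude any condensing boson $(b,x)$ with $b\neq 1$, and conclude that the condensate acts only on the Abelian factor, yielding $\mathcal{A}\times\mathcal{Y}$. Two of your refinements are genuinely more careful than the printed proof: your lift $\{(a_i,1)\}$ is the one that actually satisfies the zero-mode condition (the paper's $\{(a_i,x)\}$ gives $(a_i,x)\times(b,x)=(a_i\times b,\,x\times x)$, which contains $(b,x)$ only if $x\times x=x$, and is not closed under conjugation unless $x=\bar{x}$), and you correctly notice that the hypothesis $d_b\leq\sum_i d_{a_i}$ is weaker than the strict inequality~\eqref{eq: contradiction} required by the no-go theorem, a point the paper passes over by asserting that the hypothesis ``directly implies'' \eqref{eq: contradiction}.

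However, the repair on which your argument hinges---the sharpened bound $d_b\geq n^\varphi_{(b,x)}\bigl(1+\sum_i N^b_{a_i b}d_{a_i}\bigr)$---has a genuine gap. The extra $+1$ is legitimate only if no element of $\mathcal{I}_b$ restricts to the vacuum, and you delegate this to your induction; but the identity is explicitly allowed to belong to $\mathcal{I}_b$ (the paper's own application of this Lemma to $\mathcal{A}_{\mathrm{Fib}}^{\otimes N}\times\mathbb{Z}_2^{\otimes N}$ uses $\mathcal{I}_b=\{1,b\}$), and $1^{\downarrow}=\varphi$ always, so in exactly that case your separation double counts the vacuum term. Moreover, an induction ordered by increasing quantum dimension is not well-founded when $\mathcal{I}_b$ contains particles with $d_{a_i}\geq d_b$ (again $b\in\mathcal{I}_b$ in that application). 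In fact the equality case cannot be excluded by any refinement: an Abelian particle $b$ admits $\mathcal{I}_b=\{1\}$, which meets the hypothesis with equality, yet such a boson may condense (e.g., the toric-code charge, so $\mathcal{A}$ the toric code satisfies the hypothesis as literally written while violating the conclusion). The Lemma must therefore be read with the strict inequality of the no-go theorem, which is how the paper's proof implicitly treats it; with that reading your argument reduces to the paper's one-line application of the theorem, and your inductive handling of condition 1 (which the paper glosses over, in the spirit of its footnote on the inductive use of the zero-mode definition) remains a sensible supplement, provided the induction is organized over the bosons appearing in the fusion products $a_i\times\bar{a}_j$ rather than over the quantum dimensions of the $a_i$ themselves.
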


\begin{proof}
This Lemma follows almost directly from the no-go theorem. Let us denote a particle from $\mathcal{A}\times\mathcal{X}$ by the pair $(b,x)$ where $b\in\mathcal{A}$ and $x\in \mathcal{X}$.
If $(b,x)$ is boson, we can show that it has to be an uncondensable one, except if $b=1$. The set 
\begin{equation}
\mathcal{I}_{(b,x)}=\left\{
(a_1,x),\cdots, (a_m,x)
\right\},
\end{equation}
(where $a_1,\cdots, a_m$ form a set $\mathcal{I}_b$ of zero modes of $b$ whose existence is guaranteed by assumption) satisfies all the conditions 1--3 form the definition of a set of $(b,x)$ zero modes. Since $x$ is an Abelian particle, $d_x=1$ and Eq.~\eqref{eq: quantum dimensional obstruction for any particle} directly implies that the sum of the quantum dimensions of the particles in $\mathcal{I}_{(b,x)}$ satisfies the inequality~(4) from the main text. Hence, $(b,x)$ cannot condense. 
In turn, this implies any condensable boson in $\mathcal{A}\times\mathcal{X}$ is of the form $(1,x)$. A condensate of this form is transparent to the anyons in $\mathcal{A}$ and will thus leave this sub-TQFT unaffected. It will only induce a condensation $\mathcal{X}\to\mathcal{Y}$, so that the final theory is of the from~\eqref{eq: final product theory}.
\end{proof}

We return to the example of $\mathcal{A}_{\mathrm{Fib}}\times\mathbb{Z}_2$. Consider $N$ layers of this theory, i.e., $\mathcal{A}_{\mathrm{Fib}}^{\otimes N}\times\mathbb{Z}_2^{\otimes N}$. This multi-layer TQFT satisfies all assumptions of Lemma~\ref{abelian Lemma}: for each anyon $b\in \mathcal{A}_{\mathrm{Fib}}^{\otimes N}$, a choice for the set $\mathcal{I}_b$ is given by $\mathcal{I}_b=\{1,b\}$. This is so because all possible bosons appearing in the fusion product of $b\times b$ are uncondensable by the no-go theorem and the sum of the quantum dimensions of $\mathcal{I}_b$, given by $1+d_b$ is larger than $d_b$. We conclude that the $\mathcal{A}_{\mathrm{Fib}}^{\otimes N}$ structure is preserved under any condensation transition in such a theory.

\subsection{General constraints on boson condensation}
\label{sec: general lemmas}

In this section, we list lemmas that pose other general constraints on condensation transitions in TQFTs.

\begin{mylem}
Suppose $S=\left\{ a_{1},\cdots, a_{m}\right\} $ is a collection of particles in a TQFT $\mathcal{A}$
with $a_{i}\times\bar{a}_{i}$ containing no bosons other than the identity
-- i.e., $n_{a_{i}}^{t}=\delta_{a_{i}^{\downarrow}}^{t}$ and $a_{i}$
does not split. Moreover assume $a_{i}^{\downarrow}\neq a_{j}^{\downarrow}$ for $i\neq j$. Then if a boson $B$ appears in the fusion of $a_{i}$ and $\bar{a}_{j}$,  $a_{i}\times \bar{a}_{j}=B+\cdots$ for any $i\neq j$, then
$B$ is not condensable. \end{mylem}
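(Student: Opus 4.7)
The plan is to read off the result directly from the fundamental restriction--fusion consistency relation, Eq.~\eqref{eq.commuteRstrFus}, evaluated at a single carefully chosen triple. Fix an $i\neq j$ for which $B\in a_i\times\bar a_j$ (the hypothesis asserts at least one such pair exists) and apply Eq.~\eqref{eq.commuteRstrFus} with $a=a_i$, $b=\bar a_j$, and $t=\varphi$, yielding
\begin{equation*}
\sum_{r,s\in\mT} n^r_{a_i}\, n^s_{\bar a_j}\, \tilde N^\varphi_{rs} \;=\; \sum_{c\in\mA} N^c_{a_i\bar a_j}\, n^\varphi_c.
\end{equation*}

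First I would simplify the left-hand side using the non-splitting hypothesis. By assumption $n^r_{a_i}=\delta^r_{a_i^\downarrow}$; the conjugation symmetry $n^t_a=n^{\bar t}_{\bar a}$ stated earlier in the paper then forces $\bar a_j$ to also not split, with $n^s_{\bar a_j}=\delta^s_{\,\overline{a_j^\downarrow}\,}$. The double sum on the LHS therefore collapses to $\tilde N^\varphi_{a_i^\downarrow,\,\overline{a_j^\downarrow}}$, which by uniqueness of the antiparticle in $\mT$ is $1$ if $a_i^\downarrow=a_j^\downarrow$ and $0$ otherwise. The distinctness assumption $a_i^\downarrow\neq a_j^\downarrow$ for $i\neq j$ rules out the first possibility, so the LHS vanishes.

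Next, for the RHS, every $N^c_{a_i\bar a_j}\,n^\varphi_c$ is a non-negative integer, and the $c=B$ contribution equals $N^B_{a_i\bar a_j}\,n^\varphi_B\geq n^\varphi_B$ since $N^B_{a_i\bar a_j}\geq 1$ by hypothesis. Assuming for contradiction that $B$ condenses gives $n^\varphi_B\geq 1$, so the RHS is at least $1$, contradicting LHS${}=0$. Hence $n^\varphi_B=0$, proving $B$ is not condensable. There is no substantive technical obstacle: the argument is a one-line consequence of Eq.~\eqref{eq.commuteRstrFus} once the hypotheses trivialize both sides. The only bit of bookkeeping is observing that the non-splitting of $a_j$ transfers to $\bar a_j$ via the conjugation symmetry, which is what allows the double sum on the LHS to collapse to a single Kronecker delta and thus be killed by the distinct-restriction assumption.
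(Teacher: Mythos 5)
Your proposal is correct and follows essentially the same route as the paper: both evaluate Eq.~\eqref{eq.commuteRstrFus} at $a=a_i$, $b=\bar a_j$, $t=\varphi$, use non-splitting plus distinct restrictions to kill the left-hand side for $i\neq j$, and then read off $n^\varphi_B=0$ from the non-negative right-hand side. Your version merely spells out the collapse of the double sum via $\tilde N^\varphi_{rs}$ and the conjugation symmetry, which the paper leaves implicit.
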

\begin{proof}
Using Eq.~(2) from the main text for $a=a_{i}$, $b=\bar{a}_{j}$ and $t=\varphi$, we have $\delta_{ij}=\sum_{t}n_{a_{i}}^{t}n_{\bar{a}_{j}}^{t}=\sum_{c}n_{c}^{\varphi}N_{a_{i}\bar{a}_{j}}^{c}$.
For $i\neq j$ we get \textbf{$\sum_{B}n_{B}^{\varphi}N_{a_{i}\bar{a}_j}^{B}=0$}.
So if boson $B$ appears in $a_{i}\times \bar{a}_{j}$, we must have $n_{B}^{\varphi}=0$,
so that $B$ is not condensable. \end{proof}

\begin{mylem}
Consider a TQFT $\mathcal{A}$ with no fusion multiplicity and just one boson $B$. If $B$ is condensed then either $B$ is abelian or $B^{\downarrow} = \varphi + r $ where $r$ is a single anyon. 
\label{lem: single boson lemma}
\end{mylem}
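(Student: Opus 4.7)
\emph{Approach.} My plan is to reduce the claim to a single application of \eqref{eq.commuteRstrFus}, with both hypotheses entering in essential ways. First I would establish $B=\bar B$: since $\theta_{\bar B}=\theta_B=1$, the antiparticle $\bar B$ is also a boson, and the uniqueness assumption forces $\bar B\in\{1,B\}$; the option $\bar B=1$ is ruled out because $B\neq 1$. Self-duality together with the conjugation relation $n^r_B=n^{\bar r}_{\bar B}=n^{\bar r}_B$ is what will let the key identity collapse into a sum of squares.

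\emph{Key identity.} I would then specialize \eqref{eq.commuteRstrFus} to $a=b=B$ and $t=\varphi$. Using $\tilde N^\varphi_{rs}=\delta_{s,\bar r}$ on the left, and the fact that only bosons can condense---so the only candidates for $n^\varphi_c\neq 0$ are $c\in\{1,B\}$---the relation reduces to
\[
\sum_{r\in\mathcal{T}}\bigl(n^r_B\bigr)^2 \;=\; N^1_{BB} + N^B_{BB}\,n^\varphi_B \;=\; 1 + N^B_{BB}\,n^\varphi_B.
\]
The no-fusion-multiplicity hypothesis is the decisive structural input here: it forces $N^B_{BB}\in\{0,1\}$ and thereby cuts off all higher-multiplicity possibilities.

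\emph{Case analysis and anticipated obstacle.} If $N^B_{BB}=0$, the identity becomes $\sum_r(n^r_B)^2=1$; combined with $n^\varphi_B\ge 1$ this forces $B^\downarrow=\varphi$, and \eqref{eq: quantum dimensions} then yields $d_B=1$, the abelian alternative. If $N^B_{BB}=1$, the identity reads $\sum_r(n^r_B)^2=1+n^\varphi_B$, so $(n^\varphi_B)^2\le 1+n^\varphi_B$; integrality pins $n^\varphi_B=1$ (via the golden-ratio bound $n^\varphi_B\le(1+\sqrt{5})/2$), leaving $\sum_{r\neq\varphi}(n^r_B)^2=1$, which means exactly one nonvacuum anyon $r$ appears with multiplicity one. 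This is precisely $B^\downarrow=\varphi+r$. I do not anticipate a serious obstacle: once one recognizes that ``no fusion multiplicity'' caps $N^B_{BB}\le 1$, and thereby forbids both $n^\varphi_B\ge 2$ and multiple nonvacuum summands, the rest is bookkeeping. The sub-step most worth double-checking is the self-duality $B=\bar B$; without it the identity would involve cross terms $n^r_B n^{\bar r}_B$ rather than squares, and the case analysis would have to be reworked.
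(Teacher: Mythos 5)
Your proposal is correct and follows essentially the same route as the paper's proof: specialize Eq.~\eqref{eq.commuteRstrFus} to $a=b=B$, $t=\varphi$, use $B=\bar{B}$ together with the single-boson and no-fusion-multiplicity hypotheses to get $\sum_t (n^t_B)^2 = 1 + N^B_{BB}\,n^\varphi_B$, conclude $n^\varphi_B=1$, and read off $B^\downarrow=\varphi$ (hence $d_B=1$, abelian) or $B^\downarrow=\varphi+r$. The only cosmetic difference is that you case-split on $N^B_{BB}\in\{0,1\}$ first while the paper bounds $n^\varphi_B$ via $(n^\varphi_B)^2\leq 1+n^\varphi_B N^B_{BB}$ and then cases on the value of the sum; the content is identical.
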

\begin{proof}
As there is just a single boson, $B=\overline{B}$. Equation~(2) from the main text implies 
\begin{equation}
\sum_{t} n^{t}_{B} n^{t}_{B} = \sum_{c} n^{\varphi}_{c} N^{c}_{B B}= 1 +n^{\varphi}_{B} N^{B}_{B B}.
\end{equation}
Notice, however, that the left-hand side is greater or equal to $n^{\varphi}_{B} n^{\varphi}_{B}$. For condensation, this implies $n^{\varphi}_{B}=1$, and tells us that $\sum_{t} n^{t}_{B} n^{t}_{B} =1 \text{ or } 2$. In the former case, $B^{\downarrow} = \varphi$. This implies $d_{B} = 1$, and so $B$ is a quantum dimension $1$ boson hence must have $N^{a}_{BB}=\delta_{a,1}$. In the latter case, $B$ restricts to just two particles with multiplicity $1$ each, so that $B^{\downarrow}=\varphi+r$. 
\end{proof}

\begin{mylem}
With the  conditions of Lemma~\ref{lem: single boson lemma}, and assuming $B$ has $d_{B}>1$, condensation of $B$ can only occur if $N^{a}_{BB} N^{b}_{BB} \leq N^{B}_{ab}$ for all anyons $a$ and $b$ of $\mathcal{A}$.
 \end{mylem}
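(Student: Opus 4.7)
The plan is to combine Eq.~\eqref{eq.commuteRstrFus} with the tight restriction data for $B$ supplied by Lemma~\ref{lem: single boson lemma}. Since $d_B>1$, that lemma forces $B^{\downarrow}=\varphi+r$ for a single non-vacuum anyon $r\in\mathcal{T}$, with $\bar r=r$ (because $B=\bar B$, as $B$ is the unique non-trivial boson), $n^{\varphi}_B=n^{r}_B=1$, and, via $\sum_t(n^t_B)^2=1+N^B_{BB}$, the value $N^B_{BB}=1$. Since only the vacuum and $B$ condense, $n^{\varphi}_c=\delta_{c,1}+\delta_{c,B}$ for every $c\in\mathcal{A}$.

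First I would determine $n^r_a$ for every $a$. Applying Eq.~\eqref{eq.commuteRstrFus} with $t=\varphi$, $b=B$ and using the Frobenius reciprocity $N^B_{aB}=N^a_{BB}$ (valid because $B$ is self-dual), one obtains $n^{\varphi}_a+n^r_a=\delta_{a,B}+N^a_{BB}$, which splits as
\begin{equation*}
n^r_a=N^a_{BB}\ \text{for}\ a\notin\{1,B\},\qquad n^r_1=0,\qquad n^r_B=1.
\end{equation*}

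The core step is Eq.~\eqref{eq.commuteRstrFus} at $t=\varphi$ for two arbitrary anyons $a,b$:
\begin{equation*}
\sum_{r_1\in\mathcal{T}} n^{r_1}_a\, n^{\bar r_1}_b \;=\; \delta_{\bar a,b}+N^B_{ab}.
\end{equation*}
Retaining only the $r_1=r$ term on the non-negative left-hand side (and using $\bar r=r$) gives the lower bound $n^r_a n^r_b$; for $a,b\notin\{1,B\}$ this equals $N^a_{BB}N^b_{BB}$, so in the generic case $a\neq\bar b$ the inequality $N^a_{BB}N^b_{BB}\leq N^B_{ab}$ is immediate.

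The main obstacle will be the diagonal case $a=\bar b$, where the naive bound degrades to $(N^a_{BB})^2\leq 1+N^B_{a\bar a}$. My plan there is to apply Eq.~\eqref{eq.commuteRstrFus} at $t=r$ to the pair $(a,\bar a)$: its right-hand side can be rewritten through the fusion-associativity identity $\sum_c N^c_{a\bar a}N^c_{BB}=\sum_e (N^e_{aB})^2$, while the condensed-theory coefficient $\tilde N^{r}_{rr}$ on the left is fixed by matching $(B\times B)^{\downarrow}$ with $B^{\downarrow}\times B^{\downarrow}$. The extra non-negative contribution from the $r_1=s_1=r$ term absorbs the spurious $+1$, sharpening the bound to $(N^a_{BB})^2\leq N^B_{a\bar a}$ and closing the argument for all $(a,b)$.
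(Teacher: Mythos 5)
Your main derivation is exactly the paper's proof: from Eq.~\eqref{eq.commuteRstrFus} at $t=\varphi$, $b=B$ you get $n^r_a=N^B_{aB}=N^a_{BB}$ for $a\notin\{1,B\}$, and then Eq.~\eqref{eq.commuteRstrFus} at $t=\varphi$ for the pair $(a,b)$, keeping only the $r$ term on the left, gives $N^a_{BB}N^b_{BB}\leq N^B_{ab}$. The paper does precisely this, and it restricts to $a,b\neq 1,B$ with $a\neq b$ (for self-dual theories this is the same as $b\neq\bar a$); it never attempts the diagonal case.

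The genuine gap is your final step, the plan to ``absorb the spurious $+1$'' and sharpen the diagonal bound to $(N^a_{BB})^2\leq N^B_{a\bar a}$: this inequality is false, so no rearrangement of Eq.~\eqref{eq.commuteRstrFus} at $t=r$ can deliver it. A counterexample satisfying every hypothesis of the lemma is SU(2)$_{10}$: its unique nontrivial boson is $B=j{=}3$ with $d_B=2+\sqrt{3}>1$, there are no fusion multiplicities, and $B$ does condense (this is the $E_6$ modular invariant, i.e.\ the conformal embedding into SO(5)$_1$). Yet $j{=}1$ and $j{=}4$ both appear in $3\times 3=0+1+2+3+4$, while $1\times 1=0+1+2$ and $4\times 4=0+1+2$ contain no $j{=}3$; since all SU(2)$_k$ anyons are self-dual, these are exactly diagonal pairs $b=\bar a$ with $(N^a_{BB})^2=1>N^B_{a\bar a}=0$. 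The term $\delta_{\bar a,b}$ on the right-hand side is therefore genuinely needed, and the lemma's ``for all anyons $a$ and $b$'' must be read as excluding $b=\bar a$ and $a,b\in\{1,B\}$ (the case $a=b=1$ already fails trivially, since $N^1_{BB}N^1_{BB}=1>N^B_{11}=0$); with that reading your argument, like the paper's, is complete.
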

\begin{proof}
Lemma~\ref{lem: single boson lemma} shows $B^{\downarrow}=\varphi+r$, where $r$ a simple object. Consider $a \in \mathcal{A}$ where $a\neq 1,B$. Equation~(2) from the main text  for $b=B$ and $t=\varphi$ reads 
\begin{equation}
\begin{split}
n^{r}_{a} n^{r}_{B}
=& n^{r}_{a}  \\
=& N^{B}_{a B} n^{\varphi}_{B}\\
=& N^{B}_{a B}
\end{split}
\end{equation}
Consider now Eq.~(2) from the main text for $b\neq 1,B$ and for $a\neq b$, which gives
\begin{equation}
\sum_{t} n^{t}_{a} n^{t}_{b}= N^{B}_{ab} \geq  n^{r}_{a} n^{r}_{b}
\end{equation}
Combining the $a,B$ and $b,B$ and $a,b$ equations gives the inequality
\begin{equation}
 N^{b}_{B B} N^{a}_{B B}  \leq N^{B}_{ab}.
\end{equation}

\end{proof}

\clearpage
\newpage

\begin{thebibliography}{99}


\bibitem{FQH1} D. C. Tsui, H. L. Stormer, A. C. Gossard,
``Two-dimensional magnetotransport in the extreme quantum limit". 
Physical Review Letters, 48(22), 1559 (1982)

\bibitem{FQH2} R.B. Laughlin, 
``Anomalous quantum Hall effect: an incompressible quantum fluid with fractionally charged excitations". 
Physical Review Letters, 50(18), 1395 (1983)


\bibitem{Kitaev06} A.  Y. Kitaev, 
``Anyons in an exactly solved model and beyond". 
Annals of Physics, 321(1): 2-111 (2006)

\bibitem{Kitaev}  A. Y. Kitaev, 
``Fault-tolerant quantum computation by anyons", 
 Annals Phys. {\bf 303}, 2  (2003); quant-ph/9707021.

\bibitem{WWZ89} X. G. Wen, F.  Wilczek, A.  Zee, 
``Chiral spin states and superconductivity". 
Phys.  Rev.  B {\bf 39},   11413 (1989). 

\bibitem{Wen89} X. G. Wen, 
``Vacuum degeneracy of chiral spin states in compactified space". 
Phys.  Rev.  B {\bf 40},  7387 (1989). 

\bibitem{Wen90} X. G. Wen,  
``Topological orders in rigid states". 
Int. J. Mod. Phys. B {\bf 4}, 239 (1990). 

\bibitem{CGW10} X. Chen, Z. C.  Gu,X. G.  Wen, 
``Local unitary transformation, long-range quantum entanglement, wave function renormalization, and topological order". 
Phys. Rev.  B {\bf 82},  155138  (2010). 

\bibitem{Wen15} X. G.  Wen, 
``A theory of 2+1D bosonic topological orders".
arXiv:1506.05768. 

\bibitem{W82} F. Wilczek, 
``Quantum mechanics of fractional-spin particles".
Physical Review Letters, 49, 957 (1982)


\bibitem{Bais3} F. A. Bais, J. C. Romers,
``The modular S-matrix as order parameter for topological phase transitions",
New J. Phys. {\bf 14},  035024 (2012). 

\bibitem{HeMoradiWen} H. He, H. Moradi, and X. G. Wen, 
``Modular matrices as topological order parameter by a gauge-symmetrypreserved tensor renormalization approach", 
Phys. Rev. B 90, 205114 (2014).

\bibitem{MoradiWen} H. Moradi, X. G. Wen, 
''Universal Wave Function Overlap and Universal Topological Data from Generic Gapped Ground States",
Phys. Rev. Lett. 115, 036802 (2015).

\bibitem{LiuWangYouWen} F. Liu, Z. Wang, Y. Z. You, X.G. Wen, 
``Modular transformations and topological orders in two dimensions",
arXiv:1303.0829, 2013.

\bibitem{KO01} A.  Kirillov Jr, V.  Ostrik, 
``On q-analog of McKay correspondence and ADE classification of $sl(2)$ conformal field theories",
Adv. Math. {\bf 171}, 183 (2002); 
arXiv:math/0101219. 

\bibitem{Bais1} F.A. Bais, J.K. Slingerland,
``Condensate induced transitions between topologically ordered phases", 
Phys. Rev. B {\bf 79}, 045316 (2009); 
arXiv preprint arXiv:0808.0627. 

\bibitem{Bais2} I.S. Eli\"ens, J.C. Romers, F.A. Bais, 
``Diagrammatics for Bose condensation in anyon theories", 
 Phys. Rev. B {\bf 90}, 195130 (2014); arXiv:1310.6001. 

\bibitem{YJW12}  Y. Z. You, C. M.  Jian, X. G.  Wen, 
``Synthetic Topological Degeneracy by Anyon Condensation". 
Phys. Rev. B {\bf 87}, 045106 (2013). 

\bibitem{BW10}  M. Barkeshli, X. G.  Wen,  
``Anyon condensation and continuous topological phase transitions in non-Abelian fractional quantum hall states". 
Phys. Rev.  Lett. {\bf 105},  216804 (2010). 

\bibitem{BSS12} F. J. Burnell, S. H. Simon, J. K. Slingerland,
``Phase transitions in topological lattice models via topological symmetry breaking",
New Journal of Physics, 2012, 14(1): 015004.

\bibitem{BSS11} F. J. Burnell, Steven H. Simon, and J. K. Slingerland,
``Condensation of achiral simple currents in topological lattice models: Hamiltonian study of topological symmetry breaking",
Phys. Rev. B {\bf 84}, 125434, (2011).

\bibitem{HW13} L. Y. Hung, Y.  Wan, 
``Symmetry-enriched phases obtained via pseudo anyon condensation", 
International Journal of Modern Physics B, 2014, 28(24): 1450172.

\bibitem{L14} L. Kong, 
``Anyon condensation and tensor categories",
Nucl. Phys. B {\bf 886}, 436 (2014).

\bibitem{HW15b} L. Y. Hung, Y.  Wan, 
``Generalized ADE classification of topological boundaries and anyon condensation",
JHEP 1507,  120 (2015). 

\bibitem{Neupert16} T. Neupert, H. He, C.W. von Keyserlingk, G. Sierra, B. A. Bernevig, 
``Boson condensation in topologically ordered quantum liquids",
arxiv:1601.01320. 





\bibitem{Bookera12}
T.\ Bookera,  A.\ Davydov,
``Commutative algebras in Fibonacci categories'',
J. Algebra \textbf{355}, 176  (2012).




\bibitem{WittenJonesPoly} E. Witten, 
``Quantum Field Theory and Jones Polynomical", 
Commun. Math. Phys. {\bf 121}, 351 (1989).





\bibitem{FFRC06} J.  Fr\"ohlich, J. Fuchs, I. Runkel, C.  Schweigert,
``Correspondences of ribbon categories". 
Advances in Mathematics, 2006, 199(1): 192-329.


\bibitem{LW05} M. A. Levin, X. G.  Wen, 
``String-net condensation: A physical mechanism for topological phases". 
Phys. Rev.  B {\bf 71},  045110 (2005). 


\bibitem{LW14} T. Lan, X. G.  Wen,  
``Topological quasiparticles and the holographic bulk-edge relation in (2+ 1)-dimensional string-net models". 
Phys. Rev.  B {\bf 90}, 115119 (2014). 

\bibitem{KW14} L. Kong, X. G.  Wen,  
``Braided fusion categories, gravitational anomalies, and the mathematical framework for topological orders in any dimensions". 
arXiv:1405.5858.

\bibitem{KWZ15} L. Kong, X. G. Wen, H.  Zheng, 
``Boundary-bulk relation for topological orders as the functor mapping higher categories to their centers". 
arXiv:1502.01690.


%
\bibitem{Bonderson07}
P.\ Bonderson, 
``Non-Abelian Anyons and Interferometry'', PhD Thesis,
\url{http://thesis.library.caltech.edu/2447/2/thesis.pdf} (2007).


\bibitem{BernevigNeupert15}
B.\ A.\ Bernevig and T.\ Neupert, 
``Topological Superconductors and Category Theory'', 
arXiv:1506.05805.

\bibitem{Barkeshli14}
M.\ Barkeshli, P.\ Bonderson, M.\ Cheng, and Z.\ Wang,
``Symmetry, Defects, and Gauging of Topological Phases'',
arXiv:1410.4540.


\bibitem{footnote-Ib-set}
In demanding that $a_i\times a_j$ does not contain condensable bosons, as opposed to not containing any bosons at all (except the identity), we are anticipating a inductive application of the no-go theorem. Once we have shown that a boson $B$, whose set  $\mathcal{I}_B$ is such that  $a_i\times a_j$, with $a_i, a_j\in \mathcal{I}_B$, does not contain any boson (except the identity), 
is uncondensable, it is allowed that $B$ appears in the fusion product $a_i\times a_j$ of the set $\mathcal{I}_{B'}$ of another boson $B'$. 

\bibitem{ADE} A. Capelli, C. Itzykson, J.-B. Zuber,  "Modular Invariant Partition Functions in Two-Dimensions",  Nucl. Phys. B {\bf 280}, 445 (1987); 
``The ADE Classification of Minimal and A$_1^{(1)}$ Conformal Invariant Theories"   Commun. Math. Phys. {\bf 113}, 1 (1987). 

\bibitem{yellow} F. Di Francesco, P. Mathieu, D. S\'en\'echal,
``Conformal Field Theory" (Springer, 1997). 

\bibitem{Levin12} M. Levin and Z.-C. Gu,
``Braiding statistics approach to symmetry-protected topological phases'',
Phys. Rev. B {\bf 86}, 115109 (2012).

\bibitem{Chen13}X. Chen, Z.-C. Gu, Z.-X. Liu, and X.-G. Wen,
``Symmetry protected topological orders and the group cohomology of their symmetry group'',
Phys. Rev. B {\bf 87}, 155114 (2013).


\bibitem{Supp}
see supplemental material.


\end{thebibliography}

\end{document}